\documentclass[conference]{IEEEtran}
\IEEEoverridecommandlockouts

\usepackage{cite}
\usepackage{amsmath,amssymb,amsfonts}
\usepackage{amsthm}
\usepackage{algorithmic}
\usepackage{graphicx}
 
\usepackage{eso-pic}
 \usepackage{pgfplots}
 \usepgfplotslibrary{groupplots}
 \pgfplotsset{compat=1.18}

\usepackage{textcomp}
\usepackage{xcolor}
\def\BibTeX{{\rm B\kern-.05em{\sc i\kern-.025em b}\kern-.08em
    T\kern-.1667em\lower.7ex\hbox{E}\kern-.125emX}}
\usepackage{mathtools}
\usepackage{bm}
\usepackage{booktabs}
\usepackage{array}
\usepackage{multirow}
\usepackage{algorithm}
\usepackage{hyperref}
\hypersetup{colorlinks=true,linkcolor=blue,citecolor=blue,urlcolor=blue}

\usepackage{comment}
\newcommand{\R}{\mathbb{R}}

\newcommand{\norm}[1]{\left\lVert#1\right\rVert}
\newcommand{\image}{\operatorname{image}}
\newcommand{\rank}{\operatorname{rank}}
\newcommand{\Hank}{\mathcal{H}}
\newcommand{\col}{\operatorname{col}}

\newcommand{\BL}{\mathcal{B}_{L}}
\newcommand{\BS}[1]{\mathcal{B}_{L}^{(#1)}}

\theoremstyle{plain}
\newtheorem{theorem}{\bf Theorem}

\newtheorem{proposition}[theorem]{\bf Proposition}
\newtheorem{corollary}[theorem]{\bf Corollary}
\theoremstyle{definition}
\newtheorem{definition}{\bf Definition}
\newtheorem{assumption}{\bf Assumption}
\newtheorem{remark}{\bf Remark}

\begin{document}

\title{Iterative Behavioral Identification for\\
Covert Misappropriation Attacks%
\thanks{This work was  partially supported  by CAPES, Finance Code 001; by FAPERJ under grant E-26/204.268/2024; and by the Conselho Nacional de Desenvolvimento Científico e Tecnológico (CNPq), Brazil, under grants 403601/2023-1,  315106/2023-9, and BPP/PQ-Sr  313335/2022-2, as well as  FCT through the LASIGE Research Unit, ref. UID/00408/2025, DOI https://doi.org/10.54499/UID/00408/2025.}}

\author{%
\IEEEauthorblockN{%
Victor Mattos\IEEEauthorrefmark{1},
Amit Bhaya\IEEEauthorrefmark{1},
Daniel Sadoc Menasche\IEEEauthorrefmark{4},
Alan O.~de S\'{a}\IEEEauthorrefmark{3}}
\IEEEauthorblockA{%
\IEEEauthorrefmark{1}Dept.\ Electrical Engineering, UFRJ,
Rio de Janeiro, Brazil.
\texttt{victor.mattos@coppe.ufrj.br}}
\IEEEauthorblockA{%
\IEEEauthorrefmark{4}Institute of Computing, UFRJ, Rio de Janeiro, Brazil.}
\IEEEauthorblockA{%
\IEEEauthorrefmark{3}LASIGE, Faculdade de Ciências, Universidade de Lisboa, Lisbon, Portugal.}}

\AddToShipoutPictureFG*{%
  \AtPageLowerLeft{%
    \put(\LenToUnit{0.75in},\LenToUnit{0.55in}){%
      \parbox[b]{7.0in}{%
        \fontsize{7}{8}\selectfont
        \textcopyright{} 2026 IEEE.
        Personal use of this material is permitted.
        Permission from IEEE must be obtained for all
        other uses, in any current or future media,
        including reprinting/republishing this material
        for advertising or promotional purposes,
        creating new collective works, for resale or
        redistribution to servers or lists, or reuse
        of any copyrighted component of this work
        in other works.
      }%
    }%
  }%
}

\maketitle
\enlargethispage{0.05\baselineskip}
\begin{abstract}
		We present a data-driven, three-phase framework for covert misappropriation attacks on closed-loop networked control systems. A man-in-the-middle (MITM) adversary, with access only to intercepted input–output data, incrementally identifies the plant’s behavioral subspace while maintaining stealth under an infinity-norm intrusion-detection system (IDS). Phase~1 collects an initial rank-deficient Hankel matrix using a low-amplitude probing signal that satisfies an analytical stealth bound. Phase~2 adapts online rank-increasing experiment design to the closed-loop setting, using causal predictor-based masking to limit IDS residuals while the Hankel rank grows from $1$
		to the full behavioral rank $n+mL$ in exactly $n+mL-1$ steps.
		We show that the identified subspaces are strictly nested and that the projection error of any desired trajectory decreases monotonically, strictly so almost surely at each rank increase. Phase~3 employs the final Hankel matrix in a DeePC-style optimization to synthesize a covert attack that tracks an arbitrary reference while respecting the IDS constraint. Simulations on a DC motor with PI control demonstrate full-rank identification and stealthy tracking. Stealth is guaranteed analytically in Phases~1 and~3 and is numerically supported in Phase~2. The method is model-free and assumes noise-free LTI dynamics.
\end{abstract}

\begin{IEEEkeywords}
data-driven control, cyber-physical security, covert attack, behavioral
systems, predictive control, networked control systems.
\end{IEEEkeywords}

	\section{Introduction}
	\label{sec:intro}
	
	Cyber-physical systems increasingly rely on networked communication, making them vulnerable to adversaries capable of manipulating sensor and actuator signals. Among these threats, \emph{covert attacks} are particularly dangerous: an attacker steers the physical process toward an unintended trajectory while keeping observable quantities within limits that avoid triggering alarms. Their stealthy nature makes such attacks difficult to detect and potentially severe.
	
	We consider a \emph{man-in-the-middle} (MITM) adversary positioned between the plant and controller. The attacker observes and alters both channels, but begins with \emph{no} knowledge of the plant or controller model. The closed-loop system is assumed linear time-invariant (LTI), a standard approximation for industrial processes near a setpoint. The attacker’s objective is to \emph{learn} the closed-loop input–output behavior from intercepted data and subsequently drive the plant along a malicious reference while remaining undetected.
	
	This setting creates a fundamental conflict:
	\begin{itemize}
		\setlength{\itemsep}{0pt}\setlength{\parskip}{0pt}
		\item[(T1)] \emph{Information need}: informative, persistently exciting (PE) data are required to build a useful predictor;
		\item[(T2)] \emph{Stealth need}: large injections produce output deviations that violate the IDS threshold.
	\end{itemize}
	Our framework resolves this conflict through a structured, incremental design.
	
	Three components make this possible: the \emph{behavioral framework}~\cite{willems2005}, representing plant trajectories as linear combinations of Hankel matrix columns for model-free prediction; the \emph{online experiment design} of~\cite{waardePersistentExcitationOnline2021}, increasing Hankel rank one column at a time; and the \emph{covert misappropriation} architecture~\cite{Smith2015}, supplying a masking mechanism for stealth.

These ingredients yield a three-phase attack pipeline:
\begin{itemize}
    \setlength{\itemsep}{0pt}\setlength{\parskip}{0pt}
    \item \textbf{Phase~1} (initial covert probing): a low-amplitude
Rademacher injection produces an initial rank-deficient Hankel matrix under an analytical stealth bound.
\item \textbf{Phase~2} (iterative covert identification): each iteration
injects a rank-increasing signal and applies a predictor-based masking. The attacker's representational subspace expands strictly at each rank increase, and the projection error of the desired attack trajectory onto
this subspace decreases monotonically (Theorem~\ref{thm:strict-decrease}).

\item \textbf{Phase~3} (behavioral misappropriation): the final Hankel
matrix is used in a DeePC optimization~\cite{Coulson2019} to drive
the plant to an attacker-chosen trajectory satisfying the IDS
constraint.
\end{itemize}
	
	Stealth is guaranteed analytically in Phases~1 and~3, whereas in
	Phase~2 it is supported numerically.

    The remainder of the paper follows the structure outlined above: Section~\ref{sec:context} describes the problem context and contributions; Section~\ref{sec:model} presents the system model; Section~\ref{sec:threat} defines the threat model and IDS. Sections~\ref{sec:phase1}–\ref{sec:phase2} detail Phases~1 and~2. Section~\ref{sec:theory} establishes theoretical properties; Section~\ref{sec:phase3} presents Phase~3. Section~\ref{sec:sims} reports simulations. Section~\ref{sec:discussion} discusses assumptions and Section~\ref{sec:conc} concludes.

\section{Problem Context, Literature, and Contributions}
	\label{sec:context}
	
	\subsection{Networked Control Systems under MITM Attacks}
	
	Networked control systems in industrial automation are vulnerable to
	MITM adversaries that intercept and alter controller–actuator and
	sensor–controller signals in real time, realistic in SCADA and ICS
	environments, where routers and protocol converters may be breached.
	The attacker’s objective is \emph{covert misappropriation}: steering
	the plant toward an alternative setpoint or trajectory while keeping
	all monitored quantities within the IDS limits so alarms remain silent.
	
	\subsection{Limitations of Existing Work}
	
	\paragraph*{Model-based covert attacks}
	Classical covert-attack designs, including covert misappropriation~\cite{Smith2015},
	zero-dynamics attacks~\cite{teixeira2015}, replay attacks~\cite{mo2009}, and
	false-data injection~\cite{liu2011}, assume the adversary knows the plant
	and controller dynamics or relevant closed-loop transfer functions,
	unrealistic for a network-layer attacker.

	\paragraph*{Passive data-driven identification}
	Purely passive identification is
	insufficient under steady-state regulation, where closed-loop trajectories are
	typically far from persistently exciting (PE): without active probing the
	attacker cannot obtain a rich predictor, yet active probing without stealth
	constraints is easily detected.
	
	These observations reveal three gaps:
	\begin{itemize}
		\setlength{\itemsep}{0pt}\setlength{\parskip}{0pt}
		\item[(L1)] \textbf{Model knowledge}: existing attacks require plant or
		controller models unavailable to a MITM adversary.
		\item[(L2)] \textbf{Passive ID}: passive observation does not yield PE data,
		while naive active identification violates stealth.
		\item[(L3)] \textbf{Closed-loop behavioral synthesis}: no prior work combines
		model-free behavioral prediction~\cite{willems2005,Coulson2019} with
		stealth-constrained active identification and closed-loop attack synthesis.
	\end{itemize}
	
	Recent efforts address isolated aspects of these gaps~\cite{duStealthinessAnalysisIdentified2024, bhowmik2025}, but none provide a unified, stealth-constrained framework.
	
	\subsection{Contributions}
	
	This paper closes gaps (L1)–(L3) through a three-phase, incremental,
	model-free attack pipeline operating solely on intercepted closed-loop data:
	
	\begin{enumerate}
		\setlength{\itemsep}{0pt}\setlength{\parskip}{0pt}
		\item \textbf{Three-phase covert framework}: a complete pipeline requiring no
		a priori model knowledge and integrating probing, identification, and
		behavioral attack synthesis.
		
		\item \textbf{Closed-loop sample complexity}: Remark~\ref{rmk:closed-loop}
		shows that the attacker’s additive authority over the actuator channel makes
		the closed-loop rank-increasing procedure equivalent to the open-loop design
		of~\cite{waardePersistentExcitationOnline2021}. Full behavioral rank is
		achieved with $T_1=L$ Phase-1 samples and $T_2=n+mL-1$ Phase-2 steps, for a
		total of $T^\star=n+(m+1)L-1$ samples.
		
		\item \textbf{Nested subspaces and monotone projection error}: Theorem
		\ref{thm:strict-decrease} and Corollary~\ref{cor:finite-reconstruction}
		establish that the identified subspaces are strictly nested and that the
		projection error $d_k(w^\star)$ decreases monotonically, strictly almost
		surely at each rank increase.
		
		\item \textbf{DeePC-based covert attack synthesis}: the final Hankel matrix is
		deployed in a convex DeePC-style QP (Section~\ref{sec:phase3}) to generate a
		covert misappropriation attack satisfying explicit IDS constraints.
	\end{enumerate}
	
	All analysis assumes noise-free LTI dynamics; robustness to noise is left for
	future work.

\section{System Model and Behavioral Preliminaries}
\label{sec:model}

\subsection{Plant, Controller, and Closed Loop}

The plant is a discrete-time LTI system of \emph{unknown} order~$n$:
\begin{equation}
  x_{t+1} = A x_t + B u_t,\qquad
  y_t = C x_t,
  \label{eq:plant}
\end{equation}
with state $x_t \in \R^n$, input $u_t \in \R^m$, and output
$y_t \in \R^p$.
The matrices $(A,B,C)$, the lag $\ell$ and the exact order $n$ are \emph{unknown to the
attacker}.

The plant is regulated by an LTI controller,
\begin{equation}
  x_{c,t+1} = A_c x_{c,t} + B_c \tilde y_t,\qquad
  u_{c,t}   = C_c x_{c,t} + D_c \tilde y_t,
  \label{eq:controller}
\end{equation}
where $\tilde y_t$ is the sensor signal received by the controller, possibly modified by the attacker. $x_{c,t} \in \R^{n_c}$ is
the controller state, $(A_c,B_c,C_c,D_c)$ and $n_c$ are \emph{unknown}.

\begin{assumption}[Stability and minimality]
\label{ass:stability}
The plant~\eqref{eq:plant} is controllable and observable, and the controller~\eqref{eq:controller} stabilizes the closed loop.
\end{assumption}

\begin{assumption}[Steady-state operation]
\label{ass:steadystate}
In nominal operation the closed loop remains near a constant setpoint.
From a pre-attack window, the attacker estimates $(\bar u, \bar y)$ and bounds $\varepsilon_{\mathrm{ss}},
\varepsilon_u \ge 0$ such that
$\norm{y_t - \bar y}_\infty \le \varepsilon_{\mathrm{ss}}$ and
$\norm{u_{c,t} - \bar u}_\infty \le \varepsilon_u$ for all $t$.
\end{assumption}

\subsection{Peak Gain of the Actuator-to-Output Map}

Stealth constraints depend on the closed-loop peak ($\ell^\infty$) gain.

\begin{definition}[Empirical peak gain]
\label{def:peak-gain}
For an additive actuator perturbation
$\Delta u_{[0,t]}$ to the resulting output deviation
$\Delta y_{[0,t]}$, the
\emph{peak gain} is
\begin{equation}
  \hat G_\infty :=
  \sup_{\Delta u \neq 0}
  \frac{\norm{\Delta y}_\infty}{\norm{\Delta u}_\infty},
  \label{eq:peak-gain}
\end{equation}
estimated or upper-bounded from a pre-attack probing.

\end{definition}

An injection satisfying
$\norm{\Delta u}_\infty \le
(\varepsilon_{\mathrm{ids}}-\varepsilon_{\mathrm{ss}})/\hat G_\infty$
keeps the \emph{unmasked} deviation
$\norm{\tilde y_t-\bar y}_\infty$ within the IDS margin, since the
nominal output already occupies up to $\varepsilon_{\mathrm{ss}}$ of
the budget (Assumption~\ref{ass:steadystate}); the margin is positive
because $\varepsilon_{\mathrm{ids}}>\varepsilon_{\mathrm{ss}}$
(Definition~\ref{def:ids} below).
Masking in Phase 2 can further reduce the realized residual.

\subsection{Behavioral Representation and Hankel Matrices}

We adopt the behavioral viewpoint~\cite{willems2005}.
Let $q = m+p$ and $w_t = \col(u_t,y_t) \in \R^q$.
The \emph{length-$L$ behavior} of the plant is
\begin{equation}
  \label{eq:behavior}
  \begin{aligned}
    \BL := \bigl\{\, &(w_0,\dots,w_{L-1})\in(\R^q)^L \bigm| \\
    &w \text{ is a length-$L$ trajectory of \eqref{eq:plant}} \,\bigr\}.
  \end{aligned}
\end{equation}
Although the data are
gathered \emph{inside} the closed loop, each recorded pair
$(u_t,y_t)$ is a plant
trajectory, so the collected Hankel windows sample $\BL$.
Passive data typically span only a low-dimensional subset; active
injection is required to explore $\BL$.

For a sequence $\{w_t\}$ of length-$T$ and depth $\ell \le L \le T$, the \emph{depth-$L$ Hankel matrix} is
\begin{equation}
  \Hank_L(w) :=
  \begin{bmatrix}
    w_0    & w_1    & \cdots & w_{T-L}   \\
    w_1    & w_2    & \cdots & w_{T-L+1} \\
    \vdots & \vdots & \ddots & \vdots    \\
    w_{L-1}& w_L   & \cdots & w_{T-1}
  \end{bmatrix}
  \in \R^{qL\times(T-L+1)}.
  \label{eq:hankel}
\end{equation}

\begin{theorem}[Fundamental Lemma~\cite{willems2005}]
\label{thm:fundamental}
If the plant is controllable and observable of order
$n$ and $u^d$ is persistently exciting of order $L+n$, then
\begin{equation}
  \rank\Hank_L(w^d) = n + mL, \quad \image\Hank_L(w^d) = \BL .
  \label{eq:fl}
\end{equation}
\end{theorem}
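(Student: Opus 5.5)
We prove the two claims in \eqref{eq:fl} by sandwiching the rank: an upper bound that holds for any data, and a matching lower bound that uses persistency of excitation.

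\textbf{Inclusion and upper bound.} Every column of $\Hank_L(w^d)$ is a length-$L$ window of a plant trajectory, so $\image\Hank_L(w^d)\subseteq\BL$. Write $\mathcal O_L$ for the extended observability matrix and $\mathcal T_L$ for the block-Toeplitz impulse-response matrix. Every $w\in\BL$ is determined by its initial state $x_0\in\R^n$ and its inputs $u_{[0,L-1]}\in\R^{mL}$ through $y=\mathcal O_L x_0+\mathcal T_L u$. Hence $\BL$ is the image of a linear map from $\R^{n+mL}$. Because $L\ge\ell$, the pair $(A,C)$ is observable and $\mathcal O_L$ has full column rank $n$, so this map is injective. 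Therefore $\dim\BL=n+mL$. It remains to show $\rank\Hank_L(w^d)\ge n+mL$; equality of the images then follows from the inclusion.

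\textbf{Lower bound via the state factorization.} After a row permutation that groups inputs and outputs,
\[
\Hank_L(w^d)\;\cong\;\begin{bmatrix} I & 0\\ \mathcal T_L & \mathcal O_L\end{bmatrix}
\begin{bmatrix}\Hank_L(u^d)\\ X\end{bmatrix},
\]
where $X=[x_0\ \cdots\ x_{T-L}]$ is the state sequence. The left factor has full column rank $mL+n$. So it suffices to show that the stacked matrix $\begin{bmatrix}\Hank_L(u^d)\\ X\end{bmatrix}$ has full row rank $mL+n$. Equivalently, any row vector $[\eta\ \ \xi]$ with $\eta\Hank_L(u^d)+\xi X=0$ must vanish.

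\textbf{Main step.} This is the standard argument of Willems et al., and it is where persistency of excitation of order $L+n$ and controllability enter. Suppose $[\eta\ \ \xi]\neq0$ annihilates the stacked matrix. Use $x_{t+1}=Ax_t+Bu_t$ to shift the relation forward $n$ times, producing $n+1$ relations on the depth-$(L+n)$ input Hankel matrix and $X$. By Cayley--Hamilton, a suitable combination with the coefficients of the characteristic polynomial of $A$ eliminates the state. The result is a left annihilator of $\Hank_{L+n}(u^d)$, which has full row rank $m(L+n)$ by persistency of excitation of order $L+n$. That annihilator must therefore be zero. Unpacking the zero coefficients shows that $\xi$ is orthogonal to $A^kB$ for all $k$. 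Controllability then forces $\xi=0$, and full row rank of $\Hank_L(u^d)$ (implied by persistency of excitation) forces $\eta=0$. The main obstacle is the bookkeeping in this elimination: verifying that the highest-shift coefficient is exactly $\eta$ (or its leading block) so that the contradiction actually closes. I would follow the proof in~\cite{willems2005}, handling first the leading coefficient and then proceeding by induction on the shift index.
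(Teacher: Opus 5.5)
The paper does not prove this theorem; it imports it from \cite{willems2005}. Your proposal is a correct, self-contained proof, but it takes a different route from the cited source.

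\textbf{Comparison with the source.} The argument in \cite{willems2005} is behavioral. Roughly, a left-kernel vector of $\Hank_L(w^d)$ that is not an annihilator of $\BL$ is read as a polynomial row $\eta(\lambda)$. Persistency of excitation of order $L+n$, together with a dimension count of the annihilators of the length-$(L+n)$ behavior, then forces some multiple $\alpha(\lambda)\eta(\lambda)$, with $\alpha\neq 0$ and $\deg\alpha\le n$, to annihilate the whole behavior. Controllability (left primeness of a kernel representation) lets one cancel $\alpha$. You instead fix a minimal realization, factor $\Hank_L(w^d)$ through $\col(\Hank_L(u^d),X)$, and replace the module-theoretic step by Cayley--Hamilton and the Kalman rank test.

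Your route is more elementary and separates the hypotheses cleanly. Controllability and persistency of excitation give full row rank of the stacked matrix. Observability with $L\ge\ell$ (built into the paper's definition of $\Hank_L$) serves only to make the left factor, call it $M$, injective. In fact, full row rank alone already gives $\image\Hank_L(w^d)=\image M=\BL$ for every $L$. So $L\ge\ell$ is needed only for the value $n+mL$ of the rank, not for the image equality. In exchange, the behavioral proof never chooses a state and uses controllability in its intrinsic form.

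\textbf{The bookkeeping step.} The bookkeeping you flag does close, but in the opposite order from the one you state. Write the shifted relations as
$v_k=[\,\xi A^k,\ \xi A^{k-1}B,\ \dots,\ \xi B,\ \eta,\ 0_{1\times m(n-k)}\,]$, $k=0,\dots,n$.
These annihilate $\col(X_-,\Hank_{L+n}(u^d))$ with $X_-=[x_0\ \cdots\ x_{T-L-n}]$. Let $\chi(\lambda)=\sum_{k=0}^{n}a_k\lambda^k$, $a_n=1$, be the characteristic polynomial of $A$.

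The last $L$ input blocks of $\sum_k a_kv_k$ contain only $\eta$ and form a unit-triangular system. This gives $\eta_L=\dots=\eta_1=0$ first. Only then do the first $n$ blocks, which also contain $\eta$-terms, reduce to $\sum_{k=p+1}^{n}a_k\,\xi A^{k-1-p}B=0$ for $p=n-1,\dots,0$. That yields $\xi A^iB=0$ for $i<n$, and controllability gives $\xi=0$.

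So $\eta=0$ comes before $\xi=0$, and your closing appeal to the row rank of $\Hank_L(u^d)$ is redundant, though harmless.
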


\section{Threat Model and IDS}
\label{sec:threat}

\subsection{MITM Channels}

 A MITM adversary intercepts and modifies both actuator and sensor channels:
\begin{align}
  u_t &= u_{c,t} + u_{a,t}
  \label{eq:attack-actuator}\\[-1mm]
  \tilde y_t &= y_t + y_{a,t},
  \label{eq:attack-sensor}
\end{align}
where $u_{c,t}$ is the controller output and $y_t$ the true plant
output. The attacker injects 
 $u_{a,t}$ 
and masks the sensor signal via $y_{a,t}$.
The controller and IDS both receive $\tilde y_t$.

\subsection{Attacker Knowledge}

\textbf{Known}: IDS threshold $\varepsilon_{\mathrm{ids}}$;
steady-state estimates $(\bar u, \bar y)$; peak-gain estimate
$\hat G_\infty$ ;
that the plant and controller are LTI and closed-loop stable;
full intercept history.  

\textbf{Unknown}: plant and controller parameters, orders,
and all closed-loop or sensitivity transfer functions.

\subsection{IDS Model}

\begin{definition}[Infinity-norm IDS]
\label{def:ids}
The IDS triggers at time $t$ if
$\norm{\tilde y_t - \bar y}_\infty > \varepsilon_{\mathrm{ids}}$,
with $\varepsilon_{\mathrm{ids}} > \varepsilon_{\mathrm{ss}} \ge 0$.
An attack is \emph{stealthy} if the IDS never triggers.
\end{definition}

The IDS monitors only $\tilde y_t$; actuator injections are unobserved,
making masking the operative stealth mechanism. The $\ell^\infty$ form
yields simple analytical bounds.

\section{Phase~1: Initial Covert Probing}
\label{sec:phase1}

\subsection{Injection Design}

During $t \in [0,T_1-1]$, the attacker injects a low-amplitude Rademacher signal:
\begin{equation}
  u_{a,t} = \delta_1 s_t,\quad s_t \in \{-1,+1\}\
  \text{i.i.d.},\quad y_{a,t} = 0,
  \label{eq:phase1-injection}
\end{equation}
with no masking applied. Rademacher probing is simple to implement, spectrally flat, and widely used in online identification.

\begin{assumption}[Phase~1 stealth]
\label{ass:phase1-stealth}
The amplitude $\delta_1>0$ satisfies
$\delta_1 \le
(\varepsilon_{\mathrm{ids}}-\varepsilon_{\mathrm{ss}})/\hat G_\infty$,
so that
$\norm{y_t-\bar y}_\infty
\le \varepsilon_{\mathrm{ss}}+\hat G_\infty\delta_1
\le \varepsilon_{\mathrm{ids}}$
for all $t\in[0,T_1-1]$ (cf.\ Definition~\ref{def:peak-gain}).
\end{assumption}

\subsection{Initial Hankel Matrix and the Role of $T_1=L$}

From the recorded data $\{(u_t,y_t)\}_{t=0}^{T_1-1}$ the attacker constructs
\begin{equation}
  H^{(0)} := \Hank_L(w),\quad
  w_t = \col(u_t,y_t),\quad
  L := T_{\mathrm{ini}} + N,
  \label{eq:H0}
\end{equation}
where $T_{\mathrm{ini}}$ and $N$ are the past and future DeePC
horizons.

Following~\cite{waardePersistentExcitationOnline2021},
the first $L$ input samples may be arbitrary,
provided they are not all zero.
Choosing $T_1=L$ then yields a single nonzero column
in $H^{(0)}$, so $\BS{0}:=\image H^{(0)}$
is one-dimensional, i.e.\
$\rank H^{(0)}=1\ll n+mL$, and Phase~2 is necessary.

\begin{remark}[Why not skip Phase~2?]
\label{rem:why-phase2}
A rank-$1$ predictor can reproduce only scalar multiples of the single
observed window.
A generic target $w^\star$ has a large component orthogonal to that
one direction, so the projection error $d_0(w^\star)$ is on the order
of $\norm{w^\star}$.
Used directly in Phase~3, such a predictor yields poor tracking and
large IDS residuals.
Phase~2 enlarges the span precisely to drive this error to zero.
\end{remark}

\section{Phase~2: Iterative Covert Behavioral Identification}
\label{sec:phase2}

\subsection{Closed-Loop Sample Complexity}

In the open-loop setting, prior work~\cite{waardePersistentExcitationOnline2021}
shows that, by choosing the
input at each step so that the new column of $\Hank_L(w)$ leaves the
current image, full behavioral rank is reached after exactly
\begin{equation}
  T^\star = n + (m+1)L - 1 \quad \text{samples.}
  \label{eq:T-star}
\end{equation}
This online design is sample-economical: the offline Fundamental Lemma
requires at least
$(m+1)(L+n)-1$ samples, so~\eqref{eq:T-star} saves
$(m+1)(L+n)-1-T^\star = mn$ samples.

The open-loop design assumes the designer chooses $u_t$ freely.
The following observation shows that the MITM adversary enjoys exactly this
freedom, so the bound transfers without loss.

\begin{remark}[Closed-loop sample complexity via control authority]
\label{rmk:closed-loop}
Under Assumption~\ref{ass:stability}, consider the actuator
channel~\eqref{eq:attack-actuator} with $u_{c,t}$ intercepted before
injection.
Because $u_{a,t}\in\R^m$ is unconstrained in direction, the realized input
$u_t=u_{c,t}+u_{a,t}$ can be assigned any direction in $\R^m$ at each step.
Hence the attacker can implement any open-loop input sequence, in
particular the online rank-increasing design
of~\cite{waardePersistentExcitationOnline2021}, and attains
$\rank H^{(K)} = n+mL$, i.e.\ $\BS{K}=\BL$, in exactly
\begin{equation}
  T_2 := K = T^\star - L = n+mL-1
  \label{eq:T-two}
\end{equation}
rank-increasing steps.
Together with the $T_1=L$ samples of Phase~1, the total experiment
length is $T_1+T_2 = T^\star = n+(m+1)L-1$, matching the open-loop
sample complexity.

\end{remark}

\subsection{Stealth-Friendly Rank-Increasing Injection}
\label{subsec:rank-check}

We now give the additive realization that increases rank while
perturbing the loop as little as possible.
At time $t$, after observing $(u_{c,t},y_t)$, the attacker decides
whether to inject.
The upcoming column of $\Hank_L(w)$ is \emph{redundant} iff there is a
left-kernel vector $\xi=(\xi_1,\dots,\xi_L)\in(\R^q)^L$ of $H^{(k)}$,
not all zero, with
\begin{equation}
  \sum_{i=1}^{L} \xi_i^\top w_{t-L+i} = 0
  \quad \text{for every previously observed window.}
  \label{eq:kernel-condition}
\end{equation}
Partition $\xi_i = \col(\xi_i^u,\xi_i^y)$ conformably with
$w_t = \col(u_t,y_t)$.
Substituting $u_t = u_{c,t} + u_{a,t}$ isolates the injection:
\begin{equation}
  (\xi_L^u)^\top u_{a,t} = \alpha_t,
  \label{eq:ua-balance}
\end{equation}
where
\begin{equation}
\begin{split}
  \alpha_t :=\;&
  -\!\sum_{i=1}^{L-1}\bigl[(\xi_i^y)^\top y_{t-L+i}
  +(\xi_i^u)^\top u_{t-L+i}\bigr]\\
  &-(\xi_L^y)^\top y_t
  -(\xi_L^u)^\top u_{c,t},
\end{split}
  \label{eq:alpha-def}
\end{equation}
and every quantity in~\eqref{eq:alpha-def} is known to the attacker.

\begin{proposition}[Rank-increasing condition in closed loop]
\label{prop:rank-increase}
Suppose~\eqref{eq:kernel-condition} holds at time $t$ with nontrivial
$\{\xi_i\}$ and $\xi_L^u \neq 0$.
Then any injection satisfying
\begin{equation}
  (\xi_L^u)^\top u_{a,t} \neq \alpha_t
  \label{eq:ua-condition}
\end{equation}
makes the new column of $\Hank_L(w)$ linearly independent of all
previous columns, increasing $\rank H^{(k)}$ by exactly one.
\end{proposition}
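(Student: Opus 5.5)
The plan is a direct linear-algebra argument on the new Hankel column. I take $\xi$ to be a left-kernel vector of $H^{(k)}$, so that $\xi^\top h = 0$ for every existing column $h$ of $H^{(k)}$; this is how I read~\eqref{eq:kernel-condition}. The new column at time $t$ is
\[
h_t := \col(w_{t-L+1},\dots,w_t), \qquad w_t=\col(u_{c,t}+u_{a,t},\,y_t).
\]
All entries of $h_t$ are fixed before the attacker acts except $u_{a,t}$. In particular $y_t=Cx_t$ depends only on past inputs, so it is unaffected by the current injection. This is exactly why $\alpha_t$ in~\eqref{eq:alpha-def} is computable.

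The key step is to expand $\xi^\top h_t$. Writing $\xi^\top h_t=\sum_{i=1}^{L}\xi_i^\top w_{t-L+i}$, separating the $i=L$ term into its $u$ and $y$ parts, and substituting $u_t=u_{c,t}+u_{a,t}$ gives
\[
\xi^\top h_t = (\xi_L^u)^\top u_{a,t}-\alpha_t .
\]
This is just the rearrangement that produced~\eqref{eq:ua-balance}. Hence~\eqref{eq:ua-condition} is equivalent to $\xi^\top h_t\neq 0$.

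Next I show independence by contradiction. Suppose $h_t\in\image H^{(k)}$, so $h_t=H^{(k)}g$ for some $g$. Then $\xi^\top h_t=(\xi^\top H^{(k)})g=0$, which contradicts the previous step. So $h_t\notin\image H^{(k)}$, and $H^{(k+1)}=[\,H^{(k)}\;\;h_t\,]$ has rank exactly $\rank H^{(k)}+1$. Adding one column can raise the rank by at most one, and here it does. The hypothesis $\xi_L^u\neq 0$ is only needed to guarantee that~\eqref{eq:ua-condition} can be satisfied at all. In that case the admissible set is the complement of an affine hyperplane in $\R^m$, so the condition holds for all but a measure-zero set of injections. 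I would note this in passing, since it supports the almost-sure statements later.

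The main obstacle is interpretive rather than technical. As literally written,~\eqref{eq:kernel-condition} quantifies over ``previously observed windows'' and calls the upcoming column redundant, which conflates the left-kernel property with the new column lying in the span. I will state explicitly that the hypothesis I use is $\xi^\top H^{(k)}=0$ with $\xi\neq0$; such a $\xi$ exists whenever $\rank H^{(k)}<qL$. I will also stress that $y_t$ and the past samples in $h_t$ are independent of $u_{a,t}$. With these two points fixed, the remaining steps are routine.
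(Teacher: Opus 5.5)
Your proof is correct and is essentially the argument the paper gives in the text preceding the proposition: substituting $u_t=u_{c,t}+u_{a,t}$ turns $\xi^\top h_t$ into $(\xi_L^u)^\top u_{a,t}-\alpha_t$. So violating \eqref{eq:ua-balance} means the left-kernel vector $\xi$ does not annihilate the new column, which therefore lies outside $\image H^{(k)}$ and raises the rank by exactly one; your explicit reading of \eqref{eq:kernel-condition} as $\xi^\top H^{(k)}=0$ with $\xi\neq 0$, and your note that $y_t=Cx_t$ is unaffected by $u_{a,t}$ because the plant has no feedthrough, fill in what the paper leaves implicit.
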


A simple stealth-friendly choice satisfying~\eqref{eq:ua-condition} is
\begin{equation}
  u_{a,t} = \delta_k\,\frac{\xi_L^u}{\norm{\xi_L^u}_2},
  \qquad
  \delta_k \neq \frac{\alpha_t}{\norm{\xi_L^u}_2},
  \label{eq:ua-choice}
\end{equation}
which violates~\eqref{eq:ua-balance} and hence increases the rank by
Proposition~\ref{prop:rank-increase}. The amplitude $|\delta_k|$ is
\emph{not} constrained by the unmasked budget
$(\varepsilon_{\mathrm{ids}}-\varepsilon_{\mathrm{ss}})/\hat G_\infty$
of Assumption~\ref{ass:phase1-stealth}: that bound governs Phase~1,
where no masking is applied. In Phase~2 the operative stealth
constraint is instead on the \emph{masked} residual
$\tilde y_t-\bar y$, which the sensor masking~\eqref{eq:masking}
drives toward zero as $H^{(k)}$ improves
(Remark~\ref{rem:virtuous-cycle}).

\begin{assumption}[Closed-loop non-degeneracy]
\label{ass:nondegenerate-kernel}
At every Phase~2 iteration $k$ with $\rank H^{(k)} < n+mL$, there is a
nontrivial left-kernel vector $\{\xi_i\}_{i=1}^L$ of $H^{(k)}$ whose
last input block is nonzero, $\xi_L^u\neq 0$.
\end{assumption}

\subsection{Sensor Masking During Phase~2}

At each step the attacker masks the induced deviation.
Let $\hat y_t$ be the output predicted by the current behavioral
predictor (built from $H^{(k)}$) for the nominal input $u_{c,t}$; set
\begin{equation}
  y_{a,t} = -(\hat y_t - \bar y),
  \label{eq:masking}
\end{equation}
so that $\tilde y_t = (y_t-\hat y_t)+\bar y \approx \bar y$.

\noindent\textbf{Computing $\hat y_t$.}
With $w_{\mathrm{ini},t} := \col(w_{t-T_{\mathrm{ini}}},\ldots,w_{t-1})$
the $T_{\mathrm{ini}}$-step window preceding $t$, split
\begin{equation}
  H^{(k)} =
  \begin{bmatrix} H_{\mathrm{p}}^{(k)} \\ H_{\mathrm{f}}^{(k)} \end{bmatrix},
  \quad
  H_{\mathrm{p}}^{(k)}\in\R^{qT_{\mathrm{ini}}\times N_k},
  \quad
  H_{\mathrm{f}}^{(k)}\in\R^{qN\times N_k}.
  \label{eq:hp-hf-phase2}
\end{equation}
With minimum-norm coefficient
$g_t := H_{\mathrm{p}}^{(k)\dagger}\, w_{\mathrm{ini},t}$, $\hat y_t$
is the output channel of the first block of $H_{\mathrm{f}}^{(k)} g_t$.
This prediction is formed \emph{before} $y_t$ is observed, enabling
pre-measurement masking.

Early in Phase~2, when $H^{(k)}$ is rank-deficient, the prediction
error may be large and the masked residual is not guaranteed to satisfy
the stealth bound; as rank grows toward $n+mL$ the prediction error
tends to decrease and masking becomes more  effective.
Throughout this phase we therefore view stealth as a numerically supported property that depends on the realized prediction error, rather than as a worst-case analytical guarantee.
\begin{remark}[A virtuous cycle]
\label{rem:virtuous-cycle}
Masking quality and admissible injection size reinforce each other.
As $H^{(k)}$ improves, the prediction error
$\norm{y_{\mathrm{nom},t}-\hat y_t}_\infty$ shrinks, so a given
injection leaves a smaller masked residual; equivalently, a larger
injection amplitude $\delta_k$ still fits within the stealth budget.
Phase~2 can therefore probe more aggressively as it proceeds.
\end{remark}

\subsection{Nested Behavioral Subspaces and Projection Error}

Denote $\BS{k} := \image H^{(k)} \subseteq \R^{qL}$, the
\emph{$k$-th identified behavioral subspace}.
When Proposition~\ref{prop:rank-increase} applies,
\begin{equation}
  \BS{k} \subsetneq \BS{k+1},
  \label{eq:nested}
\end{equation}
and each step adds exactly one new direction.

\begin{definition}[Projection-based approximation error]
\label{def:dist}
For a desired trajectory $w^\star \in \BL$ and a Hankel matrix
$H^{(k)}$ with $N_k$ columns,
\begin{equation}
  d_k(w^\star)
  := \min_{g\in\R^{N_k}} \norm{H^{(k)} g - w^\star}_2
   = \norm{w^\star - P_{\BS{k}} w^\star}_2,
  \label{eq:dist-def}
\end{equation}
the Euclidean distance from $w^\star$ to $\BS{k}$, where $P_{\BS{k}}$
is the orthogonal projector onto $\BS{k}$.
\end{definition}

\subsection{Dual Interpretation and Stopping Criteria}

Phases~1--2 admit two equivalent readings: as \emph{identification}
(each step sharpens the attacker's representation of $\BL$), and as
\emph{capability expansion} (each step enlarges the set of trajectories
the attacker can reproduce with zero prediction error).
The stopping rule trades these off against stealth cost.

\paragraph*{TC1 (Full behavioral rank)}
The attacker need not know $n$: each admissible step adds exactly one
dimension (Proposition~\ref{prop:rank-increase}) while every realized
window lies in $\BL$, capping the rank at $n+mL$, so it simply injects
until a step no longer raises the rank (\emph{rank saturation}). This
occurs exactly at $\rank H^{(k)}=n+mL$, i.e.\ $\BS{k}=\BL$, after
$T_2=T^\star-L=n+mL-1$ steps (Remark~\ref{rmk:closed-loop}); then
$d_k(w^\star)=0$ for every $w^\star\in\BL$, and the order follows as a
by-product, $n=\rank H^{(K)}-mL$.

\paragraph*{TC2 (Projection-error tolerance)}
For $\varepsilon>0$, stop at
$k_{\mathrm{stop}} := \min\{k \mid d_k(w^\star)\le\varepsilon\}$.
The attacker then either executes the \emph{projected} target
$\hat w^\star := P_{\BS{k_{\mathrm{stop}}}} w^\star$ (exactly
representable, but slightly off the intended trajectory) or the
\emph{exact} target $w^\star$, accepting a prediction error of
magnitude $\varepsilon$ that the masking must absorb as a residual.
Thus, TC1 corresponds to an exact-predictive-masking regime, whereas TC2 retains a potentially
nonzero masking residual that must satisfy the IDS constraint.

\section{Theoretical Analysis}
\label{sec:theory}

\subsection{Monotone Decrease of Projection Error}

\begin{assumption}[Arbitrary desired trajectory]
\label{ass:generic}
The desired trajectory $w^\star\in\BL$ is drawn from a measure $\mu$
on $\BL$ that is absolutely continuous with respect to the
$(n+mL)$-dimensional Lebesgue measure on $\BL$.
(Equivalently, $w^\star$ avoids every fixed lower-dimensional subspace
of $\BL$.)
The measure is on $\BL$, not $\R^{qL}$, because an executable target
must itself be a plant trajectory. This is also what makes
$d_K(w^\star)=0$ at full rank in
Corollary~\ref{cor:finite-reconstruction}.
\end{assumption}

\begin{theorem}[Almost-sure strict decrease]
\label{thm:strict-decrease}
Under Assumption~\ref{ass:generic} and $S_k \subsetneq S_{k+1}\subseteq\BL$,
$d_{k+1}(w^\star) < d_k(w^\star)$ $\mu$-almost surely.
\end{theorem}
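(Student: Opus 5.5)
We read $S_k$ as $\BS{k}$, and write $P_k := P_{\BS{k}}$ for brevity. The plan is to compare the two squared distances through the orthogonal decomposition induced by the nesting $\BS{k}\subsetneq\BS{k+1}$, and then show that equality forces $w^\star$ into a proper subspace of $\BL$, which is a $\mu$-null set by Assumption~\ref{ass:generic}.

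First we decompose the larger subspace. Let $U := \BS{k+1}\cap(\BS{k})^\perp$. Because $\BS{k}\subsetneq\BS{k+1}$, $U\neq\{0\}$, and $\BS{k+1}=\BS{k}\oplus U$ orthogonally. Hence
\[
P_{k+1}=P_k+P_U ,
\]
and Pythagoras applied to Definition~\ref{def:dist} gives the identity
\[
d_k(w^\star)^2 = d_{k+1}(w^\star)^2 + \norm{P_U w^\star}_2^2 .
\]
This already yields the non-strict monotonicity $d_{k+1}\le d_k$. It also shows that equality holds exactly when $P_U w^\star=0$, that is, when $w^\star\in Z:=\BL\cap U^\perp$.

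Next we show that $Z$ is a proper subspace of $\BL$. Pick any nonzero $v\in U$. Since $U\subseteq\BS{k+1}\subseteq\BL$, we have $v\in\BL$, and $\langle v,v\rangle\neq0$ shows $v\notin U^\perp$. So $Z$ is a linear subspace of $\BL$ with $\dim Z\le \dim\BL-1 = n+mL-1$.

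Finally we invoke the measure assumption. A proper linear subspace has zero $(n+mL)$-dimensional Lebesgue measure on $\BL$. By absolute continuity, $\mu(Z)=0$, so $d_{k+1}(w^\star)<d_k(w^\star)$ $\mu$-almost surely.

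The main obstacle is a measurability and ordering subtlety: $\BS{k}$ and $\BS{k+1}$ come from data, and they must be treated as fixed with respect to the draw of $w^\star$. If the nested subspaces were chosen after seeing $w^\star$, the null-set argument would fail. We would therefore state explicitly that $w^\star$ is drawn independently of the identification data, or, as Assumption~\ref{ass:generic} phrases it, that $w^\star$ avoids every fixed lower-dimensional subspace. The conclusion then holds conditionally on the data and hence unconditionally.

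Since there are finitely many steps $k$, a union of finitely many null sets is null. This gives simultaneous strict decrease at every rank increase, which is the form needed for Corollary~\ref{cor:finite-reconstruction}.
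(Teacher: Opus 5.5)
Your proof is correct and follows essentially the same route as the paper's: a Pythagorean identity for the two nested projections, then showing that the equality set is a proper subspace of $\BL$ because a nonzero vector of $\BS{k+1}$ orthogonal to $\BS{k}$ lies in $\BL$ but not in that set, and finally absolute continuity of $\mu$. The only difference is that you use the whole complement $U=\BS{k+1}\cap(\BS{k})^\perp$ rather than a single unit vector $v_k$, so you do not need the one-direction-per-step fact from Proposition~\ref{prop:rank-increase} and your argument matches the stated hypothesis $S_k\subsetneq S_{k+1}$ exactly; your caveat that the subspaces must be fixed independently of the draw of $w^\star$ is a point the paper leaves implicit.
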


\begin{proof}
Let $\{S_k\}_{k\ge 0}$ be subspaces of $\BL$ with
 $S_k \subseteq S_{k+1}$ and $r_k:=w^\star-P_{S_k}w^\star$.
 
Because each step adds exactly one direction
(Proposition~\ref{prop:rank-increase}),
$\dim(S_{k+1}\ominus S_k)=1$; let $v_k$ be a unit vector spanning it.
Then $P_{S_{k+1}}w^\star = P_{S_k}w^\star + \langle r_k,v_k\rangle v_k$,
so $d_{k+1}^2(w^\star) = d_k^2(w^\star) - \langle r_k,v_k\rangle^2$.
Strict decrease fails iff
$\langle (I-P_{S_k})w^\star, v_k\rangle = 0$, i.e.\ iff $w^\star$ lies
in the set
$H_k := \{w\in\BL \mid \langle(I-P_{S_k})w,v_k\rangle = 0\}$.
Since $v_k\in S_{k+1}\subseteq\BL$ is nonzero and orthogonal to $S_k$,
$H_k$ is a proper subspace of $\BL$, hence of $(n+mL)$-dimensional
Lebesgue measure zero in $\BL$; absolute continuity gives
$\mu(H_k)=0$.
\end{proof}

\begin{corollary}[Finite exact reconstruction]
\label{cor:finite-reconstruction}
Run Phase~2 to TC1, with $K=T_2$ rank-increasing steps.
Under Assumption~\ref{ass:generic},
\begin{equation}
  d_0(w^\star) > d_1(w^\star) > \cdots > d_{K}(w^\star) = 0
  \quad \mu\text{-a.s.}
\end{equation}
\end{corollary}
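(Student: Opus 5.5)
The plan is to apply Theorem~\ref{thm:strict-decrease} once per step along the chain $\BS{0}\subsetneq\BS{1}\subsetneq\cdots\subsetneq\BS{K}$ produced by Phase~2, and to settle the endpoint separately with the full-rank property from TC1.

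\emph{Step 1: the nested chain.} Phase~1 with $T_1=L$ gives $\rank H^{(0)}=1$. At each iteration $k<K$ we have $\rank H^{(k)}<n+mL$. Assumption~\ref{ass:nondegenerate-kernel} then supplies a left-kernel vector with $\xi_L^u\neq 0$, so Proposition~\ref{prop:rank-increase} applies and the injection~\eqref{eq:ua-choice} raises the rank by exactly one. Every realized window is a plant trajectory, so each $\BS{k}$ lies in $\BL$. Hence $\BS{k}\subsetneq\BS{k+1}\subseteq\BL$ with $\dim\BS{k}=k+1$. By Remark~\ref{rmk:closed-loop}, after $K=T_2=n+mL-1$ steps we reach $\dim\BS{K}=n+mL$.

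\emph{Step 2: strict decrease at every step, simultaneously.} For each fixed $k\in\{0,\dots,K-1\}$, Theorem~\ref{thm:strict-decrease} with $S_k=\BS{k}$ gives $d_{k+1}(w^\star)<d_k(w^\star)$ outside a $\mu$-null set $H_k$. The event on which the whole chain of strict inequalities fails is contained in $\bigcup_{k=0}^{K-1}H_k$. This is a finite union of null sets, so it is null, and all $K$ strict inequalities hold simultaneously $\mu$-a.s.

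\emph{Main obstacle: the data-dependence of $\BS{k}$.} The subspaces $\BS{k}$ depend on the realized data, including the Rademacher signs and the controller response. Theorem~\ref{thm:strict-decrease} treats $S_k$ as fixed relative to $w^\star$, and its null set $H_k$ depends on the subspaces. I would handle this by assuming that $w^\star$ is chosen independently of the identification data, i.e.\ $\mu$ is a product-measure component independent of the experiment. Then, conditioning on any realization of the data, the chain is fixed and Step~2 applies; Fubini (or simply conditioning) turns the conditional null events into an overall null event. If $w^\star$ is instead chosen after the data, the statement should be read conditionally on the realized chain. The argument is unchanged in that case, since the theorem only needs the subspaces fixed when $w^\star$ is drawn.

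\emph{Step 3: the endpoint.} Since $\BS{K}\subseteq\BL$ and $\dim\BS{K}=n+mL=\dim\BL$ (Theorem~\ref{thm:fundamental}), we get $\BS{K}=\BL$. Because $w^\star\in\BL$ by Assumption~\ref{ass:generic}, $P_{\BS{K}}w^\star=w^\star$ and therefore $d_K(w^\star)=0$ for every such $w^\star$, not only almost surely.
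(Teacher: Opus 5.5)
Your proposal is correct and is essentially the argument the paper intends, since it gives no separate proof: apply Theorem~\ref{thm:strict-decrease} at each of the $K$ steps of the chain $\BS{0}\subsetneq\cdots\subsetneq\BS{K}\subseteq\BL$, take the finite union of the null sets $H_k$, and use $\BS{K}=\BL$ at TC1 together with $w^\star\in\BL$ to get $d_K(w^\star)=0$ deterministically. Your extra point about the data-dependence of $\BS{k}$ (draw $w^\star$ independently of the identification experiment, or read the claim conditionally on the realized chain) makes explicit something the paper leaves implicit, since its theorem is stated for fixed subspaces.
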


Corollary~\ref{cor:finite-reconstruction} formalizes the incremental
learning interpretation: the attacker's representational subspace
expands strictly until it spans $\BL$, at which point the desired
attack trajectory is exactly representable.

\begin{remark}[Excitation seed and stealth efficiency]
\label{rem:seed}
The first informative column persists in $\BS{k}$ for all later $k$
by~\eqref{eq:nested}; since the closed loop feeds it back through the
plant--controller dynamics, later rank increases may partly draw on the
system's own response, trimming the injection and stealth budget consumed.
\end{remark}

\section{Phase~3: Behavioral Covert Misappropriation}
\label{sec:phase3}

\subsection{Behavioral Predictor Construction}

After Phase~2 the attacker holds $H^{(K)}$ with $\BS{K}=\BL$ (TC1) or,
under TC2, a strict subspace containing the projected target.
Partition as in~\eqref{eq:hp-hf-phase2} with $L=T_\mathrm{ini}+N$.
Given the recent past
$w_{\mathrm{ini}} := \col(w_{t-T_\mathrm{ini}},\ldots,w_{t-1})$,
behavioral consistency yields $g\in\R^{N_K}$ with
$H_\mathrm{p}^{(K)} g = w_{\mathrm{ini}}$ (least squares under TC2), and
the predicted future is $w_\mathrm{f} = H_\mathrm{f}^{(K)} g$.

\subsection{DeePC-Style Attack Optimization}

Let $y^\star\in\R^{pN}$ be the malicious output reference and
$u^\star\in\R^{mN}$ a preferred input profile.
Split $w_\mathrm{f} = \col(u_\mathrm{f},y_\mathrm{f})$.
Phase~3 solves the convex QP
\begin{subequations}
\label{eq:deepc-attack}
\begin{align}
  \min_{g,\,u_\mathrm{f},\,y_\mathrm{f}}\;
  &\norm{y_\mathrm{f}-y^\star}_2^2
   + \lambda_u\norm{u_\mathrm{f}-u^\star}_2^2
   + \lambda_g\norm{g}_2^2
  \label{eq:deepc-obj}\\
  \text{s.t.}\;
  &\begin{bmatrix} H_\mathrm{p}^{(K)}\\ H_\mathrm{f}^{(K)} \end{bmatrix}g
   =
   \begin{bmatrix} w_{\mathrm{ini}}\\ w_\mathrm{f} \end{bmatrix},
  \label{eq:deepc-split}\\
  &\norm{\tilde y_{t+i} - \bar y}_\infty \le \varepsilon_\mathrm{ids},
   \quad i=0,\dots,N-1,
  \label{eq:deepc-ids}
\end{align}

\end{subequations}
where $\tilde y_{t+i} = y_{\mathrm{f},i} + y_{a,t+i}$, with $y_{a,t+i}$
from~\eqref{eq:masking} computed using the full-rank predictor
$H^{(K)}$, and $\lambda_u,\lambda_g\ge 0$ are regularization weights.
Constraint~\eqref{eq:deepc-ids} is linear in $(y_\mathrm{f},g)$ after
substituting the masking.
In Phase~3 the masking predictor of~\eqref{eq:masking} is evaluated for
the \emph{realized} attack input $u_{\mathrm{f}}$ rather than the
nominal $u_{c,t}$, so that $\hat y_t$ predicts the actual output under
attack and the identity $\tilde y_t\approx\bar y$ continues to hold once
the optimized injection is applied.
Problem~\eqref{eq:deepc-attack} is solved in receding-horizon fashion:
apply the first optimal input, update $w_{\mathrm{ini}}$, and re-solve.

\section{Numerical Illustration}
\label{sec:sims}

\subsection{Setup and Parameters}

We simulate a second-order discrete-time DC motor ($n=2$, $m=p=1$,
$q=2$) in closed loop with a PI controller.
The plant matrices are
$A=\left[\begin{smallmatrix}1.5462&-0.5646\\1&0\end{smallmatrix}\right]$,
$B=\left[\begin{smallmatrix}1\\0\end{smallmatrix}\right]$,
$C=\left[\begin{smallmatrix}0.3379&0.2793\end{smallmatrix}\right]$; the controller is
$A_c=\left[\begin{smallmatrix}1&-0.1673\\0&0\end{smallmatrix}\right]$,
$B_c=\left[\begin{smallmatrix}0.1701\\1\end{smallmatrix}\right]$,
$C_c=\left[\begin{smallmatrix}1&-0.1673\end{smallmatrix}\right]$,
$D_c=0.1701$. A 200-step warm-up under the nominal setpoint $r=1.0$ drives the loop
to steady state, from which $(\bar u,\bar y)$ and the bounds
$\varepsilon_{\mathrm{ss}}$, $\hat G_\infty$ are estimated.
Note that the plant and controller matrices above are used only to
generate the simulation data. The attacker has no access to them. We set $T_{\mathrm{ini}}=2$ (equal to the plant order $n$, the minimum
admissible value) and $N=20$, giving Hankel depth
$L=T_{\mathrm{ini}}+N=22$ and full behavioral rank $n+mL=24$.

\textbf{Phase~1}: $T_1=L=22$ samples; Rademacher probing
    with amplitude $\delta_1=10^{-3}$, satisfying the stealth budget
    $\delta_1\le(\varepsilon_{\mathrm{ids}}-\varepsilon_{\mathrm{ss}})
    /\hat G_\infty$; no masking applied.
    Yields $\rank H^{(0)}=1$. \textbf{Phase~2}: rank-increasing injections
    (Proposition~\ref{prop:rank-increase}) with amplitude
    $\delta_k=10^{-2}$ and masking~\eqref{eq:masking}; terminates at TC1
    ($\rank H^{(K)}=n+mL=24$) after $T_2=n+mL-1=23$ steps, total length
    $T_1+T_2=T^\star=45$. \textbf{Phase~3}: 100-step receding-horizon
    QP~\eqref{eq:deepc-attack}, $\lambda_g=10^{-5}$, $\lambda_u=10^{-3}$,
    $u\in[-2,2]$, $\varepsilon_{\mathrm{ids}} = 4\,\delta_1\hat{G}_\infty$,
    tracking the sinusoid $y^\star(t) = \bar y + A\sin(2\pi t/P)$,
    $A=0.5$,
    $P=50$\footnote{Code available at \url{https://github.com/v-mattos/covert-behavioral-id/}}.

\subsection{Results}

\definecolor{plotblue}{RGB}{0,114,178}
\definecolor{plotorange}{RGB}{230,159,0}
\definecolor{plotvermillion}{RGB}{213,94,0}

\pgfplotstableread{%
t residual
1 0.002656
2 0.003073
3 0.003277
4 0.002780
5 0.002549
6 0.003059
7 0.004022
8 0.004540
9 0.004167
10 0.003906
11 0.004275
12 0.004356
13 0.004388
14 0.004259
15 0.004160
16 0.003974
17 0.003878
18 0.004415
19 0.004645
20 0.004796
21 0.005425
22 0.005624
23 0.000920
24 0.000616
25 0.000449
26 0.000100
145 0.000100
}\figIIMasked

\pgfplotstableread{%
t residual
1 0.002656
2 0.003073
3 0.003277
4 0.002780
5 0.002549
6 0.003059
7 0.004022
8 0.004540
9 0.004167
10 0.003906
11 0.004275
12 0.004356
13 0.004388
14 0.004259
15 0.004160
16 0.003974
17 0.003878
18 0.004415
19 0.004645
20 0.004796
21 0.005425
22 0.005624
23 0.004979
24 0.004172
25 0.003938
26 0.004336
27 0.005074
28 0.005972
29 0.006937
30 0.007935
31 0.008947
32 0.009963
33 0.010976
34 0.011982
35 0.012980
36 0.013969
37 0.014948
38 0.015918
39 0.016878
40 0.017829
41 0.018771
42 0.019705
43 0.020631
44 0.021549
45 0.022459
46 0.023362
47 0.045243
48 0.092242
49 0.144371
50 0.191739
51 0.236391
52 0.282366
53 0.327993
54 0.369277
55 0.404255
56 0.433180
57 0.456269
58 0.472827
59 0.482008
60 0.483521
61 0.477511
62 0.464163
63 0.443600
64 0.416048
65 0.381930
66 0.341818
67 0.296353
68 0.246233
69 0.192232
70 0.135203
71 0.076050
72 0.015708
73 0.044876
74 0.104750
75 0.162968
76 0.218612
77 0.270803
78 0.318720
79 0.361608
80 0.398789
81 0.429677
82 0.453786
83 0.470735
84 0.480256
85 0.482201
86 0.476537
87 0.463355
88 0.442862
89 0.415381
90 0.381346
91 0.341294
92 0.295856
93 0.245749
94 0.191763
95 0.134749
96 0.075607
97 0.015269
98 0.045312
99 0.105183
100 0.163398
101 0.219040
102 0.271231
103 0.319148
104 0.362034
105 0.399215
106 0.430103
107 0.454212
108 0.471160
109 0.480682
110 0.482626
111 0.476962
112 0.463780
113 0.443287
114 0.415806
115 0.381772
116 0.341719
117 0.296282
118 0.246174
119 0.192188
120 0.135175
121 0.076033
122 0.015695
123 0.044887
124 0.104758
125 0.162973
126 0.218615
127 0.270806
128 0.318722
129 0.361609
130 0.398790
131 0.429678
132 0.453786
133 0.470735
134 0.480257
135 0.482201
136 0.476537
137 0.463355
138 0.442862
139 0.415381
140 0.381346
141 0.341294
142 0.295856
143 0.245749
144 0.191763
145 0.134749
}\figIIUnmasked

\pgfplotstableread{%
t y
1 0.994650
2 0.995067
3 0.995271
4 0.994774
5 0.994543
6 0.995053
7 0.996017
8 0.996534
9 0.996161
10 0.995900
11 0.996269
12 0.996350
13 0.996382
14 0.996253
15 0.996154
16 0.995968
17 0.995872
18 0.996410
19 0.996640
20 0.996790
21 0.997419
22 0.997618
23 0.996973
24 0.996166
25 0.995932
26 0.996330
27 0.997069
28 0.997966
29 0.998931
30 0.999929
31 1.000941
32 1.001957
33 1.002970
34 1.003976
35 1.004974
36 1.005963
37 1.006942
38 1.007912
39 1.008872
40 1.009823
41 1.010766
42 1.011699
43 1.012625
44 1.013543
45 1.014453
46 1.015357
47 1.037237
48 1.084236
49 1.136365
50 1.183733
51 1.228385
52 1.274360
53 1.319987
54 1.361271
55 1.396250
56 1.425175
57 1.448263
58 1.464822
59 1.474002
60 1.475515
61 1.469506
62 1.456157
63 1.435594
64 1.408042
65 1.373925
66 1.333813
67 1.288347
68 1.238227
69 1.184226
70 1.127197
71 1.068045
72 1.007702
73 0.947118
74 0.887244
75 0.829026
76 0.773383
77 0.721191
78 0.673274
79 0.630387
80 0.593206
81 0.562317
82 0.538208
83 0.521260
84 0.511738
85 0.509794
86 0.515457
87 0.528640
88 0.549133
89 0.576613
90 0.610648
91 0.650700
92 0.696138
93 0.746245
94 0.800231
95 0.857245
96 0.916387
97 0.976725
98 1.037307
99 1.097178
100 1.155393
101 1.211035
102 1.263225
103 1.311142
104 1.354029
105 1.391209
106 1.422097
107 1.446206
108 1.463155
109 1.472676
110 1.474620
111 1.468957
112 1.455774
113 1.435281
114 1.407801
115 1.373766
116 1.333714
117 1.288276
118 1.238169
119 1.184183
120 1.127169
121 1.068027
122 1.007689
123 0.947107
124 0.887236
125 0.829021
126 0.773379
127 0.721189
128 0.673272
129 0.630385
130 0.593205
131 0.562316
132 0.538208
133 0.521259
134 0.511738
135 0.509794
136 0.515457
137 0.528639
138 0.549133
139 0.576613
140 0.610648
141 0.650700
142 0.696138
143 0.746245
144 0.800231
145 0.857245
}\figIIYtrue

\pgfplotstableread{%
t y
46 0.991994
47 1.054661
48 1.116339
49 1.176057
50 1.232871
51 1.285887
52 1.334268
53 1.377251
54 1.414158
55 1.444408
56 1.467523
57 1.483138
58 1.491008
59 1.491008
60 1.483138
61 1.467523
62 1.444408
63 1.414158
64 1.377251
65 1.334268
66 1.285887
67 1.232871
68 1.176057
69 1.116339
70 1.054661
71 0.991994
72 0.929328
73 0.867649
74 0.807932
75 0.751117
76 0.698102
77 0.649721
78 0.606738
79 0.569830
80 0.539581
81 0.516466
82 0.500851
83 0.492981
84 0.492981
85 0.500851
86 0.516466
87 0.539581
88 0.569830
89 0.606738
90 0.649721
91 0.698102
92 0.751117
93 0.807932
94 0.867649
95 0.929328
96 0.991994
97 1.054661
98 1.116339
99 1.176057
100 1.232871
101 1.285887
102 1.334268
103 1.377251
104 1.414158
105 1.444408
106 1.467523
107 1.483138
108 1.491008
109 1.491008
110 1.483138
111 1.467523
112 1.444408
113 1.414158
114 1.377251
115 1.334268
116 1.285887
117 1.232871
118 1.176057
119 1.116339
120 1.054661
121 0.991994
122 0.929328
123 0.867649
124 0.807932
125 0.751117
126 0.698102
127 0.649721
128 0.606738
129 0.569830
130 0.539581
131 0.516466
132 0.500851
133 0.492981
134 0.492981
135 0.500851
136 0.516466
137 0.539581
138 0.569830
139 0.606738
140 0.649721
141 0.698102
142 0.751117
143 0.807932
144 0.867649
145 0.929328
}\figIIYref

\begin{figure}[t]
\centering
\begin{tikzpicture}
\begin{groupplot}[
    group style={
        group size=1 by 3,
        vertical sep=0.8cm,
    },
    width=\columnwidth,
    tick label style={font=\footnotesize},
    label style={font=\normalsize},
    grid=major,
    grid style={black!25},
    axis line style={black}
]

\nextgroupplot[
    height=0.45\columnwidth,
    xmin=-5, xmax=120,
    xtick={0,30,60,90,120},
    xticklabels={,,},
    ylabel={Residual},
    ymode=log,
    ymin=1e-5, ymax=100,
    ytick={1e-5,1e-4,1e-3,1e-2,1e-1,1e0},
    legend style={
        font=\footnotesize,
        draw=black,
        fill=white,
        cells={anchor=west},
        inner sep=1.5pt,
        at={(0.5,1.1)},
        anchor=south,
        legend columns=3
    }
]

\addplot[plotorange, thick, mark=none]
    table[x=t,y=residual] {\figIIUnmasked};
\addlegendentry{$\lVert y_t-\bar y\rVert_\infty$ (unmasked)}

\addplot[plotblue, thick, mark=none]
    table[x=t,y=residual] {\figIIMasked};

\addlegendentry{$\lVert \tilde y_t-\bar y\rVert_\infty$ (masked)}

\addplot[red!50, dashed, thick]
    coordinates {(-3,0.021827) (149,0.021827)};
\addlegendentry{$\varepsilon_{\mathrm{ids}}$}

\addplot[black!45, densely dashed, very thick, forget plot]
    coordinates {(22.5,1e-6) (22.5,100)};
\addplot[black!45, densely dashed, very thick, forget plot]
    coordinates {(45.5,1e-6) (45.5,100)};

\node[font=\footnotesize\bfseries,text=black,fill=white,fill opacity=1,
     text opacity=1,draw=black!60,line width=0.55pt,rounded corners=2pt,
     inner xsep=4pt,inner ysep=2pt,align=center]
  at (axis cs:9,10) {Phase\\[-1pt]1};
\node[font=\footnotesize\bfseries,text=black,fill=white,fill opacity=1,
     text opacity=1,draw=black!60,line width=0.55pt,rounded corners=2pt,
     inner xsep=4pt,inner ysep=2pt,align=center]
  at (axis cs:34,10) {Phase\\[-1pt]2};
\node[font=\footnotesize\bfseries,text=black,fill=white,fill opacity=1,
     text opacity=1,draw=black!60,line width=0.55pt,rounded corners=2pt,
     inner xsep=4pt,inner ysep=2pt,align=center]
  at (axis cs:69,10) {Phase\\[-1pt]3};

\nextgroupplot[
    height=0.34\columnwidth,
    xlabel={Time step},
    xlabel style={yshift=0.15cm},
    xmin=-5, xmax=120,
    xtick={0,30,60,90,120},
    ylabel={$y_t$},
    ymin=0.4, ymax=1.55,
    ytick={0.6,1.0,1.4}, yticklabel style={ /pgf/number format/fixed, /pgf/number format/precision=1, /pgf/number format/fixed zerofill },
    legend style={
        font=\scriptsize,
        draw=black,
        fill=white,
        cells={anchor=west},
        inner sep=1.2pt,
        at={(0.5,1.13)},
        anchor=south,
        legend columns=3
    }
]
\addplot[plotblue, thick]
    table[x=t,y=y] {\figIIYtrue};
\addlegendentry{$y_t$}

\addplot[plotvermillion, dash dot, thick]
    table[x=t,y=y] {\figIIYref};
\addlegendentry{$y^\star(t)$}

\addplot[black, dotted, thick]
    coordinates {(-3,0.991994) (149,0.991994)};
\addlegendentry{$\bar y$}
 \addplot[black!45, densely dashed, very thick, forget plot]
    coordinates {(22.5,-0.05) (22.5,1.7)};
\addplot[black!45, densely dashed, very thick, forget plot]
    coordinates {(45.5,-0.05) (45.5,1.7)};

\nextgroupplot[
    name=panelc,
    height=0.4\columnwidth,
    xlabel={Phase 2 iteration $k$},
    ylabel={$d_k(w^\star)/d_0(w^\star)$},
    tick label style={font=\scriptsize},
    grid=both,
    major grid style={gray!45},
    minor grid style={gray!20},
    xmin=-0.5, xmax=23.5,
    ymin=-0.03, ymax=1.05,
    xtick={0,10,20},
    ytick={0,0.25,0.50,0.75,1.00},
    yticklabel style={
        /pgf/number format/fixed,
        /pgf/number format/precision=2
    },
]
\addplot[
    blue,
    thick,
    mark=*,
    mark size=0.8pt
] coordinates {
    (0,1.00000)
    (1,0.99982)
    (2,0.99953)
    (3,0.99885)
    (4,0.99231)
    (5,0.97000)
    (6,0.94297)
    (7,0.93256)
    (8,0.93218)
    (9,0.92942)
    (10,0.92687)
    (11,0.91431)
    (12,0.87005)
    (13,0.82322)
    (14,0.77455)
    (15,0.72574)
    (16,0.64643)
    (17,0.56159)
    (18,0.49378)
    (19,0.45362)
    (20,0.24130)
    (21,0.06364)
    (22,0.01367)
    (23,0.00000)
};

\end{groupplot}
\end{tikzpicture}
\caption{Attack timeline (Phases~1--3) and Phase~2 identification.
\emph{Top}: masked residual $\lVert\tilde y_t-\bar y\rVert_\infty$
(blue) and unmasked residual $\lVert y_t-\bar y\rVert_\infty$ (orange)
vs.\ threshold~$\varepsilon_{\mathrm{ids}}$ (red, dashed); log axis,
both traces clipped at a display floor of~$10^{-4}$. \emph{Middle}:
output~$y_t$ tracking reference~$y^\star(t)$ about~$\bar y$; dashed
lines mark phase boundaries. The masked residual is $\approx 0$ here,
so the IDS constraint~\eqref{eq:deepc-ids} binds only once a
prediction error is present. \emph{Bottom}: normalized projection
error $d_k(w^\star)/d_0(w^\star)$, decreasing to zero at full
behavioral rank $n+mL=24$ after $T_2=23$ steps.}
\label{fig:unified}
\end{figure}

Figure~\ref{fig:unified} presents the complete attack timeline.
During Phase~1 ($t=0$--$21$) the low-amplitude injection produces
negligible deviation, with masked and unmasked residuals
indistinguishable and well below $\varepsilon_{\mathrm{ids}}$. In
Phase~2 ($t=22$--$44$) the unmasked residual rises as rank-increasing
injections grow more informative, while the masked residual stays near
zero, illustrating the virtuous cycle of
Remark~\ref{rem:virtuous-cycle}. Once Phase~3 begins ($t\ge 45$) the
plant output tracks the sinusoidal reference $y^\star(t)$
(oscillating $\pm0.5$ about $\bar y$) while the masked signal stays
flat; the unmasked residual exceeds $\varepsilon_{\mathrm{ids}}$ at
every peak, confirming the attack would have been detected without
masking, and the masked residual never triggers the alarm. The bottom
panel plots the normalized projection error
$d_k(w^\star)/d_0(w^\star)$ across the $T_2=23$ Phase-2 iterations: it
decreases monotonically to zero as the Hankel rank climbs from $1$ to
$n+mL=24$, confirming Corollary~\ref{cor:finite-reconstruction}: the
attacker's representational subspace expands at each rank increase
until the desired trajectory is exactly representable.

 \section{Discussion and Limitations}
\label{sec:discussion}

All analysis assumes noise-free LTI dynamics. Under measurement or
process noise, the rank test behind Proposition~\ref{prop:rank-increase}
becomes a thresholded numerical-rank decision: the strict
nesting~\eqref{eq:nested} holds only approximately, $d_k(w^\star)$
floors at a noise-dependent level instead of reaching zero, weakening
Corollary~\ref{cor:finite-reconstruction} to approximate
reconstruction, and stealth becomes probabilistic. A robust treatment,
via a quantitative persistency-of-excitation notion and the robust
fundamental lemma~\cite{Coulson2022robust}, remains an open direction.
The infinity-norm IDS itself is analytically tractable but weaker than
CUSUM or $\chi^2$ tests; adapting the masking conditions to statistical
detectors would require probabilistic arguments beyond the present scope.
Phase~2 costs $O((qL)^2)$ per step; Phase~3 solves the convex
QP~\eqref{eq:deepc-attack}, scaling with $N_K$ and $N$. Both suit
offline or slow-sampled SCADA use; real-time settings may need
warm-starting.
The attack's own regularity, systematic rank growth and
predictor-consistent residuals, is itself a detectable signature that
statistical tests for sudden PE or correlation-structure changes could
flag. The Phase~2 update resembles incremental subspace tracking, suggesting extension to slowly
time-varying plants via periodic Hankel updates.

\section{Conclusion}
\label{sec:conc}

We presented a three-phase, model-free behavioral framework for covert
misappropriation under a MITM threat model. Phase~1 seeds a
rank-deficient Hankel matrix via low-amplitude Rademacher probing.
Phase~2 increases the Hankel rank under stealth constraints; the
attacker's additive actuator authority makes this closed-loop
procedure equivalent to the open-loop design of
\cite{waardePersistentExcitationOnline2021}, reaching full behavioral
rank in $T^\star=n+(m+1)L-1$ samples and recovering the plant order as
a by-product. The identified subspaces are strictly nested, and the
projection error of any desired trajectory decreases monotonically,
almost surely strictly at each rank increase, reaching zero at full
rank (Theorem~\ref{thm:strict-decrease},
Corollary~\ref{cor:finite-reconstruction}). Phase~3 deploys the
learned Hankel matrix in a DeePC-style convex QP under explicit IDS
constraints. These results show that DeePC can be repurposed for adversarial ends,
offering both a practical attack architecture and a rigorous
theoretical foundation. Open directions include noise robustness, richer
IDS models, sharper prediction-error bounds, and complementary
detection strategies.

\bibliographystyle{IEEEtran}
\bibliography{bibliography}

\end{document}